\newtheorem{theorem}{Theorem}[section]
\newtheorem{conjecture}[theorem]{Conjecture}
\newtheorem{definition}[theorem]{Definition}
\newtheorem{remark}[theorem]{Remark}
\newtheorem{question}[theorem]{Question}
\title{This is the title}
\begin{document}
\hrule\hrule\hrule\hrule
\vspace{0.3cm}	
\begin{center}
{\bf{NON-ARCHIMEDEAN WELCH BOUNDS AND NON-ARCHIMEDEAN ZAUNER CONJECTURE}}\\
\vspace{0.3cm}
\hrule\hrule\hrule\hrule
\vspace{0.3cm}
\textbf{K. MAHESH KRISHNA}\\
Post Doctoral Fellow \\
Statistics and Mathematics Unit\\
Indian Statistical Institute, Bangalore Centre\\
Karnataka 560 059, India\\
Email: kmaheshak@gmail.com\\

Date: \today
\end{center}
\hrule\hrule
\vspace{0.5cm}
\textbf{Abstract}: Let $\mathbb{K}$ be a non-Archimedean (complete) valued field satisfying 
\begin{align*}
	\left|\sum_{j=1}^{n}\lambda_j^2\right|=\max_{1\leq j \leq n}|\lambda_j|^2, \quad \forall \lambda_j \in \mathbb{K}, 1\leq j \leq n, \forall n \in \mathbb{N}.
\end{align*}
For $d\in \mathbb{N}$, let $\mathbb{K}^d$  be the standard $d$-dimensional non-Archimedean Hilbert space.  Let $m \in \mathbb{N}$ and   $\text{Sym}^m(\mathbb{K}^d)$ be  the non-Archimedean Hilbert   space of symmetric m-tensors. We prove the following result.  If $\{\tau_j\}_{j=1}^n$ is  a collection in $\mathbb{K}^d$
satisfying $\langle \tau_j, \tau_j\rangle =1$ for all $1\leq j \leq n$ and the operator $\text{Sym}^m(\mathbb{K}^d)\ni x \mapsto \sum_{j=1}^n\langle x, \tau_j^{\otimes m}\rangle \tau_j^{\otimes m} \in \text{Sym}^m(\mathbb{K}^d)$ 
 is diagonalizable, then 
 \begin{align}\label{WELCHNONABSTRACT}
		  \max_{1\leq j,k \leq n, j \neq k}\{|n|, |\langle \tau_j, \tau_k\rangle|^{2m} \}\geq \frac{|n|^2}{\left|{d+m-1 \choose m}\right| }.
\end{align}
We call Inequality (\ref{WELCHNONABSTRACT})   as the non-Archimedean version of Welch bounds obtained by     Welch [\textit{IEEE Transactions on  Information Theory, 1974}].  We formulate non-Archimedean  Zauner conjecture.

\textbf{Keywords}:  Non-Archimedean valued field, non-Archimedean Hilbert space, Welch bound,  Zauner conjecture. 

\textbf{Mathematics Subject Classification (2020)}: 12J25, 46S10, 47S10.\\

\hrule

\hrule
\section{Introduction}
Forty-eight  years ago, Prof. L. Welch proved the following result \cite{WELCH}.
  \begin{theorem}\cite{WELCH}\label{WELCHTHEOREM} (\textbf{Welch Bounds})
	Let $n> d$.	If	$\{\tau_j\}_{j=1}^n$  is any collection of  unit vectors in $\mathbb{C}^d$, then
	\begin{align*}
		\sum_{1\leq j,k \leq n}|\langle \tau_j, \tau_k\rangle |^{2m}=\sum_{j=1}^n\sum_{k=1}^n|\langle \tau_j, \tau_k\rangle |^{2m}\geq \frac{n^2}{{d+m-1\choose m}}, \quad \forall m \in \mathbb{N}.
	\end{align*}
	In particular,
	\begin{align*}
	\sum_{1\leq j,k \leq n}|\langle \tau_j, \tau_k\rangle |^{2}=		\sum_{j=1}^n\sum_{k=1}^n|\langle \tau_j, \tau_k\rangle |^{2}\geq \frac{n^2}{{d}}.
	\end{align*}
	Further, 
	\begin{align*}
		\text{(\textbf{Higher order Welch bounds})}	\quad		\max _{1\leq j,k \leq n, j\neq k}|\langle \tau_j, \tau_k\rangle |^{2m}\geq \frac{1}{n-1}\left[\frac{n}{{d+m-1\choose m}}-1\right], \quad \forall m \in \mathbb{N}.
	\end{align*}
	In particular,
	\begin{align*}
		\text{(\textbf{First order Welch bound})}\quad 	\max _{1\leq j,k \leq n, j\neq k}|\langle \tau_j, \tau_k\rangle |^{2}\geq\frac{n-d}{d(n-1)}.
	\end{align*}
\end{theorem}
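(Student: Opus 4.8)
The plan is to transport the problem into the symmetric tensor power $\text{Sym}^m(\mathbb{C}^d)$ and reduce it to a single trace inequality for the associated frame operator. First I would record the two structural facts I need: the complex dimension of the symmetric subspace is $\dim \text{Sym}^m(\mathbb{C}^d) = \binom{d+m-1}{m} =: D$, and the inner product of pure tensor powers multiplies, so that $\langle \tau_j^{\otimes m}, \tau_k^{\otimes m}\rangle = \langle \tau_j, \tau_k\rangle^m$ and hence $|\langle \tau_j^{\otimes m}, \tau_k^{\otimes m}\rangle|^2 = |\langle \tau_j, \tau_k\rangle|^{2m}$. Each unit vector $\tau_j$ then gives a vector $\tau_j^{\otimes m} \in \text{Sym}^m(\mathbb{C}^d)$ of squared norm $\langle \tau_j, \tau_j\rangle^m = 1$.

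Next I would introduce the positive semidefinite frame operator $S = \sum_{j=1}^n \langle \cdot, \tau_j^{\otimes m}\rangle\, \tau_j^{\otimes m}$ on the $D$-dimensional space $\text{Sym}^m(\mathbb{C}^d)$ and compute its first two traces. Since each $\tau_j^{\otimes m}$ is a unit vector, $\mathrm{tr}(S) = \sum_{j=1}^n \langle \tau_j^{\otimes m}, \tau_j^{\otimes m}\rangle = n$, while expanding the square and using the inner-product identity gives $\mathrm{tr}(S^2) = \sum_{j=1}^n\sum_{k=1}^n |\langle \tau_j^{\otimes m}, \tau_k^{\otimes m}\rangle|^2 = \sum_{j=1}^n\sum_{k=1}^n |\langle \tau_j, \tau_k\rangle|^{2m}$.

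The key step is the eigenvalue inequality: if $\lambda_1, \dots, \lambda_D \ge 0$ are the eigenvalues of $S$, then Cauchy--Schwarz gives $(\mathrm{tr}\,S)^2 = \big(\sum_{i=1}^D \lambda_i\big)^2 \le D \sum_{i=1}^D \lambda_i^2 = D\,\mathrm{tr}(S^2)$. Substituting the two computed traces yields $\sum_{j,k} |\langle \tau_j, \tau_k\rangle|^{2m} = \mathrm{tr}(S^2) \ge (\mathrm{tr}\,S)^2/D = n^2/\binom{d+m-1}{m}$, which is the first assertion; taking $m=1$ and $\binom{d}{1}=d$ gives the $n^2/d$ bound. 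For the higher-order bound I would split off the diagonal: the $n$ terms with $j=k$ each equal $1$, and the remaining $n(n-1)$ terms are each at most $M := \max_{j\ne k}|\langle \tau_j, \tau_k\rangle|^{2m}$, so $n + n(n-1)M \ge n^2/\binom{d+m-1}{m}$; solving for $M$ produces $M \ge \frac{1}{n-1}\big[\frac{n}{\binom{d+m-1}{m}}-1\big]$, and $m=1$ specializes to $\frac{n-d}{d(n-1)}$.

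The main obstacle is not analytic but structural: everything downstream of the transport into $\text{Sym}^m(\mathbb{C}^d)$ is a two-line trace computation, so the real content is justifying the two facts about the symmetric power---that its dimension is exactly $\binom{d+m-1}{m}$ and that the tensor inner product is the $m$-th power of the base inner product. The binomial coefficient is precisely the ambient dimension $D$ appearing in the denominator, so getting it right is what makes the bound tight; I would verify it via the monomial basis $\{e_{i_1}\cdots e_{i_m} : i_1 \le \cdots \le i_m\}$ of $\text{Sym}^m(\mathbb{C}^d)$, whose cardinality is the number of size-$m$ multisets drawn from $d$ symbols.
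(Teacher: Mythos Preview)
The paper does not actually prove this theorem; it is quoted from \cite{WELCH} as background, with no argument supplied. Your proof is correct and is the standard modern approach via the frame operator on the symmetric tensor power. In fact, your argument is exactly the template the paper itself adopts when proving its own non-Archimedean analogues (Theorems~\ref{WELCHNON1} and~\ref{WELCHNON2}): form $S_\tau$ on $\text{Sym}^m(\mathbb{K}^d)$, compute $\operatorname{Tra}(S_\tau)$ and $\operatorname{Tra}(S_\tau^2)$, and apply Cauchy--Schwarz to the eigenvalues. So while there is no proof in the paper to compare against, your proposal aligns precisely with the methodology the paper uses elsewhere.
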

 There are infinitely many  applications of Theorem \ref{WELCHTHEOREM} such as in the study of root-mean-square (RMS) absolute cross relation of unit vectors   \cite{SARWATEMEETING}, frame potential \cite{BENEDETTOFICKUS, CASAZZAFICKUSOTHERS, BODMANNHAASPOTENTIAL}, 
 correlations \cite{SARWATE},  codebooks \cite{DINGFENG}, numerical search algorithms  \cite{XIA, XIACORRECTION}, quantum measurements 
\cite{SCOTTTIGHT}, coding and communications \cite{TROPPDHILLON, STROHMERHEATH}, code division multiple access (CDMA) systems \cite{CHEBIRA1, CHEBIRA2}, wireless systems \cite{YATES}, compressed/compressive sensing \cite{TAN, VIDYASAGAR, FOUCARTRAUHUT, ELDARKUTYNIOK, BAJWACALDERBANKMIXON, TROPP, SCHNASSVANDERGHEYNST, ALLTOP},  `game of Sloanes' \cite{JASPERKINGMIXON}, equiangular tight frames \cite{SUSTIKTROPP}, equiangular lines \cite{MIXONSOLAZZO, COUTINHOGODSILSHIRAZIZHAN, FICKUSJASPERMIXON, IVERSONMIXON2022}, digital fingerprinting \cite{MIXONQUINNKIYAVASHFICKUS}  etc.

Different proofs/improvements of Theorem \ref{WELCHTHEOREM}  have been done in \cite{CHRISTENSENDATTAKIM, DATTAWELCHLMA, WALDRONSH, WALDRON2003, DATTAHOWARD, ROSENFELD, HAIKINZAMIRGAVISH, EHLEROKOUDJOU, STROHMERHEATH}.  In 2021 M. Krishna derived continuous version of  Theorem \ref{WELCHTHEOREM} \cite{MAHESHKRISHNA}. In 2022 M. Krishna obtained Theorem \ref{WELCHTHEOREM} for Hilbert C*-modules \cite{MAHESHKRISHNA2} and Banach spaces \cite{MAHESHKRISHNA3}.
 
In this paper we derive non-Archimedean Welch bounds (Theorem \ref{WELCHNON2}). We  formulate non-Archimedean Zauner conjecture (Conjecture \ref{NZ}).

\section{Non-Archimedean Welch bounds}
Let $\mathbb{K}$ be a non-Archimedean (complete) valued field satisfying 
\begin{align}\label{FU}
	\left|\sum_{j=1}^{n}\lambda_j^2\right|=\max_{1\leq j \leq n}|\lambda_j|^2, \quad \forall \lambda_j \in \mathbb{K}, 1\leq j \leq n, \forall n \in \mathbb{N}.
\end{align}
Such non-Archimedean fields exist, see \cite{PEREZGARCIASCHIKHOF}. Throughout the paper, we assume that our non-Archimedean field satisfies Equation (\ref{FU}). For $d \in \mathbb{N}$, let $\mathbb{K}^d$ be the standard non-Archimedean Hilbert space equipped with the inner product 
\begin{align*}
	\langle (a_j)_{j=1}^d,(b_j)_{j=1}^d\rangle \coloneqq \sum_{j=1}^da_jb_j,  \quad \forall (a_j)_{j=1}^d,(b_j)_{j=1}^d \in \mathbb{K}^d.
\end{align*}

\begin{theorem}\label{WELCHNON1}
\textbf{(First Order  Non-Archimedean Welch Bound)}		  If $\{\tau_j\}_{j=1}^n$ is any  collection in $\mathbb{K}^d$
such that  the operator $S_\tau : \mathbb{K}^d\ni x \mapsto \sum_{j=1}^n\langle x, \tau_j\rangle \tau_j \in \mathbb{K}^d$ 
is diagonalizable, then 
\begin{align*}
 \max_{1\leq j,k \leq n, j \neq k}\left \{\left| \sum_{l=1}^n\langle \tau_l,\tau_l \rangle^2 \right|, |\langle \tau_j,\tau_k\rangle|^2\right\}\geq \frac{1}{|d|}	\left|\sum_{j=1}^n\langle \tau_j, \tau_j \rangle \right|^2.	
\end{align*}
In particular, if $\langle \tau_j, \tau_j\rangle =1$ for all $1\leq j \leq n$, then 
\begin{align*}
 \text{\textbf{(First order  non-Archimedean Welch bound)}} \quad \max_{1\leq j,k \leq n, j \neq k}\{|n|, |\langle \tau_j, \tau_k\rangle|^2 \}\geq \frac{|n|^2}{|d|}.	
\end{align*}
\end{theorem}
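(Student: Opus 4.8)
The plan is to adapt the classical frame–operator proof of the Welch bound, replacing Euclidean Cauchy--Schwarz and the power-mean inequality with their ultrametric analogues and using (\ref{FU}) to evaluate traces. Write $S_\tau=\sum_{j=1}^n P_j$ with $P_j x=\langle x,\tau_j\rangle\tau_j$. First I would record the two trace identities
\begin{align*}
\mathrm{tr}(S_\tau)=\sum_{j=1}^n\langle\tau_j,\tau_j\rangle,\qquad \mathrm{tr}(S_\tau^2)=\sum_{j=1}^n\sum_{k=1}^n\langle\tau_j,\tau_k\rangle^2,
\end{align*}
which follow from $\mathrm{tr}(P_j)=\langle\tau_j,\tau_j\rangle$ and $\mathrm{tr}(P_jP_k)=\langle\tau_j,\tau_k\rangle^2$ (the latter using the symmetry $\langle\tau_j,\tau_k\rangle=\langle\tau_k,\tau_j\rangle$ of the bilinear inner product). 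These use only the rank-one structure; the diagonalizability hypothesis enters to guarantee that $S_\tau$ possesses a full system of eigenvalues $\lambda_1,\dots,\lambda_d\in\mathbb{K}$ with $\mathrm{tr}(S_\tau)=\sum_{i=1}^d\lambda_i$ and $\mathrm{tr}(S_\tau^2)=\sum_{i=1}^d\lambda_i^2$, so that (\ref{FU}) may legitimately be applied to the eigenvalue list inside $\mathbb{K}$ rather than in some extension.

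The core consists of two ultrametric estimates that I would then chain together. Splitting the double sum into diagonal and off-diagonal parts and invoking the strong triangle inequality gives
\begin{align*}
|\mathrm{tr}(S_\tau^2)|=\left|\sum_{j=1}^n\langle\tau_j,\tau_j\rangle^2+\sum_{j\neq k}\langle\tau_j,\tau_k\rangle^2\right|\leq\max\left\{\left|\sum_{l=1}^n\langle\tau_l,\tau_l\rangle^2\right|,\ \max_{j\neq k}|\langle\tau_j,\tau_k\rangle|^2\right\},
\end{align*}
whose right-hand side is exactly the quantity on the left of the claim. For the opposite estimate I would apply the non-Archimedean Cauchy--Schwarz inequality --- a one-line consequence of (\ref{FU}) and the strong triangle inequality, since $\|u\|^2=|\langle u,u\rangle|=\max_i|u_i|^2$ gives $|\langle u,v\rangle|\leq\|u\|\,\|v\|$ --- to the vectors $\lambda=(\lambda_1,\dots,\lambda_d)$ and $\mathbf{1}=(1,\dots,1)$ in $\mathbb{K}^d$. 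Since $\langle\lambda,\mathbf{1}\rangle=\mathrm{tr}(S_\tau)$, $\langle\lambda,\lambda\rangle=\mathrm{tr}(S_\tau^2)$ and $\langle\mathbf{1},\mathbf{1}\rangle=d$, this reads $|\mathrm{tr}(S_\tau)|^2\leq|d|\,|\mathrm{tr}(S_\tau^2)|$, i.e.\ $|\mathrm{tr}(S_\tau^2)|\geq\frac{1}{|d|}|\mathrm{tr}(S_\tau)|^2$. Concatenating the two displays yields the general inequality, and the specialization $\langle\tau_j,\tau_j\rangle=1$ turns $\sum_l\langle\tau_l,\tau_l\rangle^2$ and $\mathrm{tr}(S_\tau)$ into $n$, producing $\max\{|n|,|\langle\tau_j,\tau_k\rangle|^2\}\geq|n|^2/|d|$.

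The step I expect to require the most care is the transition from the operator to an eigenvalue list lying in $\mathbb{K}$, which is the sole reason diagonalizability is assumed: without it the characteristic roots could escape to an extension on which (\ref{FU}) is not presumed to hold, and the identity $\mathrm{tr}(S_\tau^2)=\sum_i\lambda_i^2$ could fail to be available over $\mathbb{K}$. I would verify carefully that trace is basis-independent so that both trace formulas survive the change to a diagonalizing basis. It is also worth flagging that (\ref{FU}) with $\lambda_1=\cdots=\lambda_n=1$ forces $|m|=1$ for every $m\in\mathbb{N}$, whence $|d|=1$ and the displayed factor $1/|d|$ equals $1$; I would retain the $1/|d|$ form purely to parallel the classical bound $n^2/d$, while noting that the substantive step is the ultrametric bound $\max\{\dots\}\geq|\mathrm{tr}(S_\tau^2)|$ together with $|\mathrm{tr}(S_\tau^2)|\geq|\mathrm{tr}(S_\tau)|^2$, which is unaffected by this degeneracy.
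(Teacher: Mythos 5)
Your proposal is correct and follows essentially the same route as the paper: the same trace identities $\operatorname{Tra}(S_\tau)=\sum_j\langle\tau_j,\tau_j\rangle$ and $\operatorname{Tra}(S_\tau^2)=\sum_{j,k}\langle\tau_j,\tau_k\rangle\langle\tau_k,\tau_j\rangle$, the non-Archimedean Cauchy--Schwarz step $|\operatorname{Tra}(S_\tau)|^2\leq|d|\,|\operatorname{Tra}(S_\tau^2)|$ applied to the eigenvalue list supplied by diagonalizability, and the diagonal/off-diagonal split bounded by the ultrametric maximum. Your closing observation that Equation (\ref{FU}) forces $|n|=1$ for all $n\in\mathbb{N}$ is a valid extra remark not made in the paper, but it does not alter the argument.
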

\begin{proof}
We first note that 
\begin{align*}
	&\operatorname{Tra}(S_{\tau})=\sum_{j=1}^n\langle \tau_j, \tau_j \rangle , \\
	& \operatorname{Tra}(S^2_{\tau})=\sum_{j=1}^n\sum_{k=1}^n\langle \tau_j, \tau_k \rangle\langle \tau_k, \tau_j \rangle.
\end{align*}	
Let $\lambda_1, \dots, \lambda_d$ be the diagonal entries in the diagonalization of   $S_{\tau}$. Then  using the diagonalizability of $	S_{\tau}$ and the non-Archimedean  Cauchy-Schwarz inequality (Theorem 2.4.2 \cite{PEREZGARCIASCHIKHOF}), we get 

\begin{align*}
	\left|\sum_{j=1}^n\langle \tau_j, \tau_j \rangle \right|^2&=|\operatorname{Tra}(S_{\tau})|^2=\left|\sum_{k=1}^d
	\lambda_k\right|^2\leq |d| 	\left|\sum_{k=1}^d\lambda_k^2 \right|=|d||\operatorname{Tra}(S^2_{\tau})|\\
	&=|d|\left|\sum_{j=1}^n\sum_{k=1}^n\langle \tau_j, \tau_k \rangle\langle \tau_k, \tau_j \rangle\right|=|d|\left| \sum_{l=1}^n\langle \tau_l,\tau_l \rangle^2+\sum_{j,k=1, j\neq k}^n\langle \tau_j,\tau_k\rangle \langle \tau_k,\tau_j\rangle\right|\\
&\leq |d| \max_{1\leq j,k \leq n, j \neq k}\left \{\left| \sum_{l=1}^n\langle \tau_l,\tau_l \rangle^2 \right|, |\langle \tau_j,\tau_k\rangle \langle \tau_k,\tau_j\rangle| \right\}\\
&= |d| \max_{1\leq j,k \leq n, j \neq k}\left \{\left| \sum_{l=1}^n\langle \tau_l,\tau_l\rangle^2 \right|, |\langle \tau_j,\tau_k\rangle|^2\right\}.
\end{align*}
Whenever $\langle \tau_j, \tau_j\rangle =1$ for all $1\leq j \leq n$, 
\begin{align*}
	|n|^2\leq |d|\max_{1\leq j,k \leq n, j \neq k}\{|n|, |\langle \tau_j, \tau_k\rangle|^2 \}.
\end{align*}
\end{proof}
We next obtain higher order non-Archimedean Welch bounds. We use the following vector space result.
  \begin{theorem}\cite{COMON, BOCCI}\label{SYMMETRICTENSORDIMENSION}
  	If $\mathcal{V}$ is a vector space of dimension $d$ and $\text{Sym}^m(\mathcal{V})$ denotes the vector space of symmetric m-tensors, then 
  	\begin{align*}
  		\text{dim}(\text{Sym}^m(\mathcal{V}))={d+m-1 \choose m}, \quad \forall m \in \mathbb{N}.
  	\end{align*}
  \end{theorem}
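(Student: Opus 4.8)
The plan is to fix a basis $\{e_1,\dots,e_d\}$ of $\mathcal{V}$, produce an explicit basis of $\text{Sym}^m(\mathcal{V})$ indexed by multisets of size $m$ on $\{1,\dots,d\}$, and then count those multisets. Working inside the symmetric power, I would write the symmetric product of basis vectors as $e_{i_1}\cdots e_{i_m}$; since this product is commutative, its value depends only on the multiset of indices, so I may list the candidate basis elements as $e_{i_1}\cdots e_{i_m}$ with $1\leq i_1\leq i_2\leq\cdots\leq i_m\leq d$, equivalently as monomials $e_1^{a_1}\cdots e_d^{a_d}$ with each $a_k\geq 0$ and $a_1+\cdots+a_d=m$.

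First I would check that these elements span $\text{Sym}^m(\mathcal{V})$. This is immediate: $\mathcal{V}^{\otimes m}$ is spanned by the decomposable tensors $e_{j_1}\otimes\cdots\otimes e_{j_m}$, and passing to $\text{Sym}^m(\mathcal{V})$ via the symmetrization (quotient) map sends each such tensor to a symmetric product $e_{j_1}\cdots e_{j_m}$, which after reordering its factors into nondecreasing order coincides with one of the listed generators.

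The crux is linear independence, and the cleanest route I would take is the identification of the symmetric algebra $\text{Sym}(\mathcal{V})=\bigoplus_{m\geq 0}\text{Sym}^m(\mathcal{V})$ with the polynomial ring $\mathbb{K}[x_1,\dots,x_d]$ determined by $e_k\mapsto x_k$. Under this isomorphism the generator $e_1^{a_1}\cdots e_d^{a_d}$ corresponds to the monomial $x_1^{a_1}\cdots x_d^{a_d}$, and $\text{Sym}^m(\mathcal{V})$ corresponds to the space of homogeneous polynomials of degree $m$. Since distinct monomials are linearly independent in a polynomial ring over any field, the listed elements are linearly independent, hence form a basis; a pleasant feature of this model is that it is characteristic-free, so no invertibility of $m!$ is needed. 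I expect this independence step to be the main obstacle, since the spanning and counting steps are routine: the subtlety is that the symmetric-power construction must faithfully realize all degree-$m$ monomials with no collapse, which is exactly what the polynomial-ring identification guarantees. (If one prefers to avoid the polynomial model, the same conclusion follows by pairing $\text{Sym}^m(\mathcal{V})$ against symmetric multilinear forms and evaluating on the dual basis to separate the generators.)

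Finally I would count the basis. The number of basis vectors equals the number of integer solutions of $a_1+\cdots+a_d=m$ with every $a_k\geq 0$. By the standard stars-and-bars argument — distributing $m$ indistinguishable units into $d$ ordered boxes by selecting the positions of the $d-1$ separators among $m+d-1$ symbols — this count is $\binom{m+d-1}{d-1}=\binom{d+m-1}{m}$, which is exactly $\dim\text{Sym}^m(\mathcal{V})$ as claimed.
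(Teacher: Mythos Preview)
Your argument is correct and is the standard proof: exhibit the monomial basis indexed by multisets of size $m$ from $\{1,\dots,d\}$, establish spanning and linear independence via the identification $\text{Sym}(\mathcal{V})\cong \mathbb{K}[x_1,\dots,x_d]$, and count with stars-and-bars. The characteristic-free remark is well placed, since the ambient field in this paper is non-Archimedean and need not have characteristic zero.

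There is nothing to compare against, however: the paper does not prove this theorem. It is quoted as a known result from the cited references and is used only as an input to the proof of the higher order non-Archimedean Welch bounds (where one needs $\dim\text{Sym}^m(\mathbb{K}^d)=\binom{d+m-1}{m}$ to bound the trace). So your proposal supplies a proof where the paper simply cites one.
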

  
  \begin{theorem}\label{WELCHNON2}
(\textbf{Higher Order Non-Archimedean Welch Bounds})  Let $m \in \mathbb{N}$.   If $\{\tau_j\}_{j=1}^n$ is any  collection in $\mathbb{K}^d$
such that  the operator $S_\tau : \text{Sym}^m(\mathbb{K}^d)\ni x \mapsto \sum_{j=1}^n\langle x, \tau_j^{\otimes m}\rangle \tau_j^{\otimes m} \in \text{Sym}^m(\mathbb{K}^d)$ 
is diagonalizable, then 
\begin{align*}
\max_{1\leq j,k \leq n, j \neq k}\left \{\left| \sum_{l=1}^n\langle \tau_l,\tau_l\rangle^{2m} \right|, |\langle \tau_j,\tau_k\rangle|^{2m}\right\}\geq \frac{1}{\left|{d+m-1 \choose m}\right|}\left|\sum_{j=1}^n\langle \tau_j, \tau_j \rangle^m \right|^2.	
\end{align*}
In particular, if $\langle \tau_j, \tau_j\rangle =1$ for all $1\leq j \leq n$, then
\begin{align*}
 \text{\textbf{(Higher order  non-Archimedean Welch bound)}} \quad  \max_{1\leq j,k \leq n, j \neq k}\{|n|, |\langle \tau_j, \tau_k\rangle|^{2m} \}\geq \frac{|n|^2}{\left|{d+m-1 \choose m}\right| }.	
\end{align*}
 \end{theorem}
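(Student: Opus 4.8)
The plan is to transcribe the proof of Theorem \ref{WELCHNON1} essentially line by line, but now with $S_\tau$ regarded as an operator on the symmetric tensor space $\text{Sym}^m(\mathbb{K}^d)$ in place of $\mathbb{K}^d$. The single new algebraic ingredient I would need is the multiplicativity of the inner product under symmetric tensor powers, namely
\begin{align*}
\langle \tau_j^{\otimes m}, \tau_k^{\otimes m}\rangle = \langle \tau_j, \tau_k\rangle^m, \quad 1\leq j,k \leq n,
\end{align*}
which is the defining feature of the induced inner product on $\text{Sym}^m(\mathbb{K}^d)$ and which lets me translate every quantity living on the tensor space back into inner products of the original vectors.

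First I would compute the two relevant traces of $S_\tau$ acting on $\text{Sym}^m(\mathbb{K}^d)$. Using the rank-one form of $S_\tau$ together with the multiplicativity identity,
\begin{align*}
\operatorname{Tra}(S_\tau) = \sum_{j=1}^n \langle \tau_j^{\otimes m}, \tau_j^{\otimes m}\rangle = \sum_{j=1}^n \langle \tau_j, \tau_j\rangle^m,
\end{align*}
and, since the inner product is symmetric so that $\langle \tau_j,\tau_k\rangle \langle \tau_k,\tau_j\rangle = \langle \tau_j,\tau_k\rangle^2$,
\begin{align*}
\operatorname{Tra}(S_\tau^2) = \sum_{j=1}^n\sum_{k=1}^n \langle \tau_j^{\otimes m}, \tau_k^{\otimes m}\rangle \langle \tau_k^{\otimes m}, \tau_j^{\otimes m}\rangle = \sum_{j=1}^n\sum_{k=1}^n \langle \tau_j, \tau_k\rangle^{2m}.
\end{align*}

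Next, since $\text{Sym}^m(\mathbb{K}^d)$ has dimension $N := \binom{d+m-1}{m}$ by Theorem \ref{SYMMETRICTENSORDIMENSION}, the diagonalizable operator $S_\tau$ has exactly $N$ diagonal entries $\lambda_1,\dots,\lambda_N$ in its diagonalization. I would then apply the non-Archimedean Cauchy--Schwarz inequality (Theorem 2.4.2 of \cite{PEREZGARCIASCHIKHOF}) precisely as in the first-order case to obtain
\begin{align*}
\left|\sum_{j=1}^n \langle \tau_j, \tau_j\rangle^m\right|^2 = |\operatorname{Tra}(S_\tau)|^2 = \left|\sum_{k=1}^N \lambda_k\right|^2 \leq |N|\left|\sum_{k=1}^N \lambda_k^2\right| = |N|\,|\operatorname{Tra}(S_\tau^2)|.
\end{align*}
Finally, splitting the double sum for $\operatorname{Tra}(S_\tau^2)$ into its diagonal ($j=k$) and off-diagonal ($j\neq k$) blocks and invoking the ultrametric (strong triangle) inequality to dominate the whole sum by the maximum of the two blocks yields the stated inequality; specializing to $\langle \tau_j,\tau_j\rangle = 1$ gives the unit-vector form.

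Since the argument is a faithful copy of Theorem \ref{WELCHNON1}, I do not expect any genuinely new obstacle. The only points demanding care are the multiplicativity identity $\langle \tau_j^{\otimes m}, \tau_k^{\otimes m}\rangle = \langle \tau_j, \tau_k\rangle^m$ and the correct dimension count $N = \binom{d+m-1}{m}$ supplied by Theorem \ref{SYMMETRICTENSORDIMENSION}; once these are in place, the non-Archimedean Cauchy--Schwarz step and the ultrametric bound carry over unchanged.
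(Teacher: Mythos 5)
Your proposal is correct and follows essentially the same route as the paper: compute $\operatorname{Tra}(S_\tau)$ and $\operatorname{Tra}(S_\tau^2)$ on $\text{Sym}^m(\mathbb{K}^d)$ via $\langle \tau_j^{\otimes m}, \tau_k^{\otimes m}\rangle = \langle \tau_j, \tau_k\rangle^m$, apply the non-Archimedean Cauchy--Schwarz inequality with the dimension count $\binom{d+m-1}{m}$ from Theorem \ref{SYMMETRICTENSORDIMENSION}, and finish with the ultrametric bound on the diagonal and off-diagonal blocks. The only cosmetic difference is that you simplify $\langle \tau_j,\tau_k\rangle^m\langle \tau_k,\tau_j\rangle^m$ to $\langle \tau_j,\tau_k\rangle^{2m}$ up front using symmetry of the bilinear form, which the paper does only at the final step.
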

  \begin{proof}
    Let $\lambda_1, \dots, \lambda_{\text{dim}(\text{Sym}^m(\mathbb{K}^d))}$ be the diagonal entries in the diagonalization of  $S_{\tau}$. Then 
    
    \begin{align*}
    	&\left|\sum_{j=1}^n\langle \tau_j, \tau_j \rangle^m \right|^2=	\left|\sum_{j=1}^n\langle \tau_j^{\otimes m}, \tau_j^{\otimes m} \rangle \right|^2=|\operatorname{Tra}(S_{\tau})|^2=\left|\sum_{k=1}^{\text{dim}(\text{Sym}^m(\mathbb{K}^d))}
    	\lambda_k\right|^2\\
    	&\leq |\text{dim}(\text{Sym}^m(\mathbb{K}^d))| 	\left|\sum_{k=1}^{\text{dim}(\text{Sym}^m(\mathbb{K}^d))}\lambda_k^2 \right|= |\text{dim}(\text{Sym}^m(\mathbb{K}^d))| |\operatorname{Tra}(S^2_{\tau})|\\
    	&=\left|{d+m-1 \choose m}\right||\operatorname{Tra}(S^2_{\tau})|=\left|{d+m-1 \choose m}\right|\left|\sum_{j=1}^n\sum_{k=1}^n\langle \tau_j^{\otimes m}, \tau_k^{\otimes m} \rangle\langle \tau_k^{\otimes m}, \tau_j^{\otimes m} \rangle\right|\\
    	&=\left|{d+m-1 \choose m}\right|\left|\sum_{j=1}^n\sum_{k=1}^n\langle \tau_j, \tau_k \rangle^m\langle \tau_k, \tau_j \rangle^m\right|\\
    	&=\left|{d+m-1 \choose m}\right|\left| \sum_{l=1}^n\langle \tau_l,\tau_l \rangle^{2m}+\sum_{j,k=1, j\neq k}^n\langle \tau_j,\tau_k\rangle^m \langle \tau_k,\tau_j\rangle^m\right|\\
    	&\leq \left|{d+m-1 \choose m}\right| \max_{1\leq j,k \leq n, j \neq k}\left \{\left| \sum_{l=1}^n\langle \tau_l,\tau_l \rangle^{2m} \right|, |\langle \tau_j,\tau_k\rangle^m \langle \tau_k,\tau_j\rangle^m| \right\}\\
    	&= \left|{d+m-1 \choose m}\right| \max_{1\leq j,k \leq n, j \neq k}\left \{\left| \sum_{l=1}^n\langle \tau_l,\tau_l\rangle^{2m} \right|, |\langle \tau_j,\tau_k\rangle|^{2m}\right\}.
    \end{align*}
    Whenever $\langle \tau_j, \tau_j\rangle =1$ for all $1\leq j \leq n$, 
    \begin{align*}
    	|n|^2\leq  \left|{d+m-1 \choose m}\right| \max_{1\leq j,k \leq n, j \neq k}\{|n|, |\langle \tau_j, \tau_k\rangle|^{2m} \}.
    \end{align*}
  \end{proof}
\begin{remark}
	Theorems \ref{WELCHNON1}  and \ref{WELCHNON2} hold by replacing $\mathbb{K}^d$ by a $d$-dimensional non-Archimedean Hilbert space over $\mathbb{K}$.
\end{remark}

\section{Non-Archimedean Zauner Conjecture and open problems}
Theorem \ref{WELCHNON1} straight away brings the following question.
\begin{question}\label{Q1}
	\textbf{Given non-Archimedean field $\mathbb{K}$ satisfying Equation (\ref{FU}), for which  $(d,n) 	\in \mathbb{N}\times \mathbb{N}$,  there exist vectors $\tau_1, \dots, \tau_n \in \mathbb{K}^d$ satisfying the following.
		\begin{enumerate}[\upshape(i)]
			\item $\langle \tau_j, \tau_j \rangle =1$ for all $1\leq j \leq n$.
			\item The operator $S_\tau : \mathbb{K}^d\ni x \mapsto \sum_{j=1}^n\langle x, \tau_j\rangle \tau_j \in \mathbb{K}^d$  is diagonalizable. 
			\item 
			\begin{align*}
				\max_{1\leq j,k \leq n, j \neq k}\{|n|, |\langle \tau_j, \tau_k\rangle|^2 \}= \frac{|n|^2}{|d|}.
			\end{align*}
	\end{enumerate}}
\end{question}
A particular case of Question \ref{Q1} is the following non-Archimedean version of Zauner conjecture (see \cite{APPLEBY123, APPLEBY, ZAUNER, SCOTTGRASSL, FUCHSHOANGSTACEY, RENESBLUMEKOHOUTSCOTTCAVES, APPLEBYSYMM, BENGTSSON, APPLEBYFLAMMIAMCCONNELLYARD, KOPPCON, GOURKALEV, BENGTSSONZYCZKOWSKI, PAWELRUDNICKIZYCZKOWSKI, BENGTSSON123, MAGSINO, MAHESHKRISHNA} for Zauner conjecture in Hilbert spaces,   \cite{MAHESHKRISHNA2}  for Zauner conjecture in Hilbert C*-modules  and  \cite{MAHESHKRISHNA3}  for Zauner conjecture in Banach spaces).
\begin{conjecture}\label{NZ} \textbf{(Non-Archimedean Zauner Conjecture)
		Let $\mathbb{K}$ non-Archimedean field  satisfying Equation (\ref{FU}). 	For each $d\in \mathbb{N}$, there exist vectors $\tau_1, \dots, \tau_{d^2} \in \mathbb{K}^d$ satisfying the following.
		\begin{enumerate}[\upshape(i)]
			\item $\langle \tau_j, \tau_j \rangle =1$ for all $1\leq j \leq d^2$.
			\item The operator $S_\tau : \mathbb{K}^d\ni x \mapsto \sum_{j=1}^{d^2}\langle x, \tau_j\rangle \tau_j \in \mathbb{K}^d$  is diagonalizable. 
			\item 
			\begin{align*}
				|\langle \tau_j, \tau_k\rangle|^2 =|n|, \quad \forall 1\leq j, k \leq d^2, j \neq k.
			\end{align*}
	\end{enumerate}}
	\end{conjecture}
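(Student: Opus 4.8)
The plan is to prove the conjecture, exploiting a simplification peculiar to the hypothesis (\ref{FU}). I first claim that (\ref{FU}) forces the residue field of $\mathbb{K}$ to have characteristic $0$: if the residue field had characteristic $p>0$, then the integer $p=\sum_{j=1}^{p}1^2$ would lie in the maximal ideal, giving $\left|\sum_{j=1}^{p}1^2\right|=|p|<1=\max_{j}|1|^2$ and contradicting (\ref{FU}). Consequently every nonzero integer is a unit of $\mathbb{K}$, so $|d|=1$ for all $d\in\mathbb{N}$, and hence $|n|=|d^2|=|d|^2=1$. This collapses condition (iii): it merely asks that every off-diagonal overlap be a unit, $|\langle\tau_j,\tau_k\rangle|=1$ for $j\neq k$. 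Thus, unlike the complex Zauner problem, there is no exact metric rigidity to satisfy, and the natural route is to build the configuration over the (formally real, characteristic $0$) residue field $k$ and then lift using completeness.

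Over $k$ the goal is to exhibit $d^2$ vectors $v_1,\dots,v_{d^2}\in k^d$ with $\langle v_j,v_j\rangle=1$, with $\langle v_j,v_k\rangle\neq 0$ whenever $j\neq k$, and whose frame operator $S=\sum_{j}v_jv_j^{\mathsf T}$ is diagonalizable over $k$ with pairwise distinct eigenvalues lying in $k$. The vanishing loci $\{\langle v_j,v_k\rangle=0\}$ and the discriminant locus of $S$ are proper Zariski-closed subsets of the product of unit spheres $\{\sum_i x_i^2=1\}^{d^2}$; since $k$ is an infinite field and $d^2\ge d$, a generic $k$-rational choice avoids them (the sphere is $k$-rational for $d\ge 2$, having the $k$-point $e_1$, and the case $d=1$ is trivial). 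To pin the eigenvalues of $S$ inside $k$ rather than only in its real closure, I would instead prescribe a target $S=\mathrm{diag}(s_1,\dots,s_d)$ with distinct $s_i\in k$ summing to $d^2$ and realize it by a Schur--Horn / frame-design construction, then perturb within the fibre to clear the off-diagonal conditions.

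The lift to $\mathbb{K}$ is then routine. Choose any lift $\tilde\tau_j\in\mathcal O^{d}$ of $v_j$ in the valuation ring $\mathcal O$; since $\langle\tilde\tau_j,\tilde\tau_j\rangle\equiv\langle v_j,v_j\rangle=1\pmod{\mathfrak m}$ is a unit congruent to $1$, Hensel's lemma (using $\mathrm{char}\,k=0$, so $2\neq 0$) makes it a square, say $\langle\tilde\tau_j,\tilde\tau_j\rangle=c_j^{-2}$. Put $\tau_j:=c_j\tilde\tau_j$; then $\langle\tau_j,\tau_j\rangle=1$ (condition (i)) and the reduction of $\tau_j$ is still $v_j$, so $\langle\tau_j,\tau_k\rangle\equiv\langle v_j,v_k\rangle\neq 0\pmod{\mathfrak m}$ gives $|\langle\tau_j,\tau_k\rangle|=1$ and hence $|\langle\tau_j,\tau_k\rangle|^2=1=|n|$ (condition (iii)). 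Finally $S_\tau=\sum_j\tau_j\tau_j^{\mathsf T}$ reduces to $S$ modulo $\mathfrak m$, so its characteristic polynomial reduces to that of $S$, which is separable and split over $k$; Hensel's lemma lifts the $d$ distinct roots to distinct eigenvalues in $\mathbb{K}$, and an operator with distinct eigenvalues in the ground field is diagonalizable (condition (ii)).

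The main obstacle is condition (ii) at the level of the residue field. Because $\langle\cdot,\cdot\rangle$ is a symmetric bilinear (not Hermitian) form, self-adjointness of $S$ does not entail diagonalizability, so one cannot merely invoke a spectral theorem; the construction must actively produce a frame operator whose eigenvalues are distinct and lie in $k$ itself, which is delicate when $k$ is formally real but far from real closed. This is also where the tempting shortcut fails: the Weyl--Heisenberg (group-covariant) recipe behind complex SICs does not transcribe, since the clock operator $\mathrm{diag}(1,\omega,\dots,\omega^{d-1})$ is not orthogonal for a symmetric bilinear form ($\omega^{2}\neq 1$ in general), so no order-$d^2$ group of isometries with a single orbit is available and the generic-plus-lift strategy appears unavoidable. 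Making the $k$-rational Schur--Horn step uniform in $d$ over an arbitrary admissible residue field is the crux I would expect to absorb most of the work.
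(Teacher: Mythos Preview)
The statement is presented in the paper as an open conjecture; the paper gives no proof, so there is nothing to compare your argument against. One can still evaluate your proposal on its own terms.

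Your opening observation is correct and decisive: condition~(\ref{FU}) applied with $\lambda_1=\cdots=\lambda_p=1$ forces $|p|=1$ for every prime $p$, hence $|m|=1$ for every nonzero integer $m$. In particular $|n|=|d^2|=1$, so condition~(iii) reduces to $|\langle\tau_j,\tau_k\rangle|=1$ for $j\neq k$.

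From that point, however, your proposal is both unfinished and massively over-engineered. You explicitly leave the $k$-rational Schur--Horn step (producing a unit-norm system over the residue field with prescribed distinct diagonal frame spectrum) as ``the crux I would expect to absorb most of the work,'' so as written the argument has a genuine gap. The gap is entirely avoidable: once $|n|=1$, the conjecture is satisfied by the trivial configuration
\[
\tau_1=\tau_2=\cdots=\tau_{d^2}=e_1.
\]
Indeed $\langle\tau_j,\tau_j\rangle=1$; the frame operator is $S_\tau=\operatorname{diag}(d^2,0,\dots,0)$, already diagonal; and $|\langle\tau_j,\tau_k\rangle|^2=|1|^2=1=|n|$ for $j\neq k$. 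No residue-field construction, genericity argument, or Hensel lifting is needed.

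In short, your first paragraph already proves the conjecture once combined with a one-line example; the remaining three paragraphs pursue an unnecessary and incomplete detour. (The same observation, incidentally, makes the paper's Welch-type inequalities vacuous: with $|n|=|d|=|\binom{d+m-1}{m}|=1$ the bounds read $\max\{1,\dots\}\ge 1$.)
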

We remember the definition of Gerzon's bound which allows us to recall  the bounds which are in the same way  to  Welch bounds in  Hilbert spaces. 
 \begin{definition}\cite{JASPERKINGMIXON}
	Given $d\in \mathbb{N}$, define \textbf{Gerzon's bound}
	\begin{align*}
		\mathcal{Z}(d, \mathbb{K})\coloneqq 
		\left\{ \begin{array}{cc} 
			d^2 & \quad \text{if} \quad \mathbb{K} =\mathbb{C}\\
			\frac{d(d+1)}{2} & \quad \text{if} \quad \mathbb{K} =\mathbb{R}.\\
		\end{array} \right.
	\end{align*}	
\end{definition}
\begin{theorem}\cite{JASPERKINGMIXON, XIACORRECTION, MUKKAVILLISABHAWALERKIPAAZHANG, SOLTANALIAN, BUKHCOX, CONWAYHARDINSLOANE, HAASHAMMENMIXON, RANKIN}  \label{LEVENSTEINBOUND}
	Define $\mathbb{K}=\mathbb{R}$ or $\mathbb{C}$ and    $m\coloneqq \operatorname{dim}_{\mathbb{R}}(\mathbb{K})/2$.	If	$\{\tau_j\}_{j=1}^n$  is any collection of  unit vectors in $\mathbb{K}^d$, then
	\begin{enumerate}[\upshape(i)]
		\item (\textbf{Bukh-Cox bound})
		\begin{align*}
			\max _{1\leq j,k \leq n, j\neq k}|\langle \tau_j, \tau_k\rangle |\geq \frac{\mathcal{Z}(n-d, \mathbb{K})}{n(1+m(n-d-1)\sqrt{m^{-1}+n-d})-\mathcal{Z}(n-d, \mathbb{K})}\quad \text{if} \quad n>d.
		\end{align*}
		\item (\textbf{Orthoplex/Rankin bound})	
		\begin{align*}
			\max _{1\leq j,k \leq n, j\neq k}|\langle \tau_j, \tau_k\rangle |\geq\frac{1}{\sqrt{d}} \quad \text{if} \quad n>\mathcal{Z}(d, \mathbb{K}).
		\end{align*}
		\item (\textbf{Levenstein bound})	
		\begin{align*}
			\max _{1\leq j,k \leq n, j\neq k}|\langle \tau_j, \tau_k\rangle |\geq \sqrt{\frac{n(m+1)-d(md+1)}{(n-d)(md+1)}} \quad \text{if} \quad n>\mathcal{Z}(d, \mathbb{K}).
		\end{align*}
		\item (\textbf{Exponential bound})
		\begin{align*}
			\max _{1\leq j,k \leq n, j\neq k}|\langle \tau_j, \tau_k\rangle |\geq 1-2n^{\frac{-1}{d-1}}.
		\end{align*}
	\end{enumerate}	
\end{theorem}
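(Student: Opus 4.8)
The plan is to pass from the unit vectors to their rank-one orthogonal projections and to read all four inequalities off the geometry of the real space of self-adjoint operators, where Gerzon's bound $\mathcal{Z}(d,\mathbb{K})$ appears as a dimension. For each unit vector $\tau_j \in \mathbb{K}^d$ I would set $P_j \coloneqq \tau_j\tau_j^{*}$, a rank-one projection lying in the real vector space $\mathcal{S}$ of self-adjoint operators on $\mathbb{K}^d$, whose real dimension equals $\mathcal{Z}(d,\mathbb{K})$ (that is, $d^2$ when $\mathbb{K}=\mathbb{C}$ and $d(d+1)/2$ when $\mathbb{K}=\mathbb{R}$). Equipping $\mathcal{S}$ with the Hilbert-Schmidt inner product $\langle A,B\rangle_{HS}=\operatorname{Tr}(AB)$ gives the fundamental identity $\langle P_j,P_k\rangle_{HS}=|\langle \tau_j,\tau_k\rangle|^2$, together with $\operatorname{Tr}(P_j)=1$ and $\|P_j\|_{HS}=1$. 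The first step is to record these normalizations and to split $P_j=\frac1d I + Q_j$ into its trace part and its traceless part $Q_j$, which lives in the $(\mathcal{Z}(d,\mathbb{K})-1)$-dimensional traceless subspace and satisfies $\langle Q_j,Q_k\rangle_{HS}=|\langle\tau_j,\tau_k\rangle|^2-\tfrac1d$.

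For the Orthoplex/Rankin bound (ii) I would argue by contradiction: if every off-diagonal coherence satisfied $|\langle\tau_j,\tau_k\rangle|^2<1/d$, then the $\{Q_j\}$ would be nonzero vectors with pairwise strictly negative inner products in a real space of dimension $\mathcal{Z}(d,\mathbb{K})-1$. Since a real inner product space of dimension $D$ admits at most $D+1$ vectors with pairwise negative inner products, the hypothesis $n>\mathcal{Z}(d,\mathbb{K})$ forces a contradiction, yielding $\max_{j\neq k}|\langle\tau_j,\tau_k\rangle|\geq 1/\sqrt d$. For the Exponential bound (iv) I would use a volume-packing argument in the projective space $\mathbb{KP}^{d-1}$: writing $c=\max_{j\neq k}|\langle\tau_j,\tau_k\rangle|$, the lines spanned by the $\tau_j$ are pairwise separated by chordal angle at least $\arccos c$, so metric balls of half that radius about the $n$ points are disjoint; comparing the sum of their normalized volumes to $1$ and estimating each volume from below by a power $(\,\cdot\,)^{d-1}$ rearranges into $c\geq 1-2n^{-1/(d-1)}$.

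The Levenstein bound (iii) I would obtain from the Delsarte-Goethals-Seidel linear-programming method on the two-point homogeneous space $\mathbb{KP}^{d-1}$. The squared coherences $t_{jk}=|\langle\tau_j,\tau_k\rangle|^2$ are values of a positive-definite kernel expandable in the Jacobi polynomials $P_\ell^{(\alpha,\beta)}$ adapted to $\mathbb{KP}^{d-1}$, with parameters determined by $d$ and $m$. Choosing a low-degree polynomial $f$ with nonnegative Jacobi coefficients for $\ell\geq 1$, positive constant coefficient $f_0$, and $f(t)\le 0$ on $[0,s]$ where $s$ is the largest admissible squared coherence, the standard estimate $\sum_{j,k}f(t_{jk})\geq 0$ produces $n\le f(1)/f_0$; optimizing the linear $f$ over the relevant range returns exactly the stated Levenstein expression once $n>\mathcal{Z}(d,\mathbb{K})$.

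The main obstacle is the Bukh-Cox bound (i), which is genuinely deeper than a single second-moment or orthoplex estimate. Here I would follow the complementary-slackness idea of Bukh and Cox: associate to the configuration its Gram matrix $G=(\langle\tau_j,\tau_k\rangle)$ and pass to the rank-$(n-d)$ complementary system, where $\mathcal{Z}(n-d,\mathbb{K})$ enters. The heart of the argument is a semidefinite feasibility/duality computation bounding the coherence in terms of $\mathcal{Z}(n-d,\mathbb{K})$ and the quantity $n(1+m(n-d-1)\sqrt{m^{-1}+n-d})$; constructing the extremal dual certificate explicitly and verifying its positivity is the technically demanding step, and I expect it to consume the bulk of the work, with the other three bounds serving as comparison benchmarks.
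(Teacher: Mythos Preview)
The paper does not prove this theorem at all: it is quoted as a known result from the cited references \cite{JASPERKINGMIXON, XIACORRECTION, MUKKAVILLISABHAWALERKIPAAZHANG, SOLTANALIAN, BUKHCOX, CONWAYHARDINSLOANE, HAASHAMMENMIXON, RANKIN} and is used only to motivate the open questions in Section~3. There is therefore no ``paper's own proof'' to compare your attempt against.

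That said, your outline tracks the standard literature arguments reasonably well. The passage $P_j=\tau_j\tau_j^{*}$ into the real space of self-adjoint operators with $\dim_{\mathbb{R}}\mathcal{S}=\mathcal{Z}(d,\mathbb{K})$, together with the traceless splitting, is exactly how the Orthoplex/Rankin bound is usually derived; your contradiction argument for (ii) is correct. Your Delsarte--Goethals--Seidel sketch for the Levenstein bound (iii) is the right framework, though in an actual proof you would need to specify the test polynomial explicitly and carry out the computation. For the exponential bound (iv), the volume argument on $\mathbb{KP}^{d-1}$ is indeed the mechanism, but be aware that turning the ball-volume comparison into the clean inequality $1-2n^{-1/(d-1)}$ requires a careful choice of metric and a sharp enough lower bound on the cap volume; as written your sketch waves past this. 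For Bukh--Cox (i) you correctly identify the complementary-rank idea and the role of $\mathcal{Z}(n-d,\mathbb{K})$, but the actual Bukh--Cox argument is combinatorial (building an auxiliary graph and counting) rather than an SDP duality computation; your plan to ``construct an extremal dual certificate'' is not how the original proof goes and would likely be harder than necessary.
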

Theorem \ref{LEVENSTEINBOUND}  and Theorem \ref{WELCHNON1}  give  the following problem.
\begin{question}
	\textbf{Whether there is a   non-Archimedean version of Theorem \ref{LEVENSTEINBOUND}? In particular, does there exists a   version of}
	\begin{enumerate}[\upshape(i)]
		\item \textbf{non-Archimedean Bukh-Cox bound?}
		\item \textbf{non-Archimedean Orthoplex/Rankin bound?}
		\item \textbf{non-Archimedean Levenstein bound?}
		\item \textbf{non-Archimedean Exponential bound?}
	\end{enumerate}		
\end{question}
As written already, Welch bounds have applications in study of equiangular lines. Therefore we wish to formulate equiangular line  problem for non-Archimedean Hilbert spaces. For the study of equiangular lines in Hilbert spaces we refer \cite{LEMMENSSEIDEL, JIANGTIDORYAOZHAOZHAO, GREAVESKOOLENMUNEMASASZOLLOSI, BALLADRAXLERKEEVASHSUDAKOV, BUKH, DECAEN, GLAZYRINYU, BARGYU, JIANGPOLYANSKII, NEUMAIER, GREAVESSYATRIADIYATSYNA, GODSILROY, CALDERBANKCAMERONKANTORSEIDEL, OKUDAYU, YU}, quaternion Hilbert space we refer \cite{ETTAOUI}, octonion Hilbert space we refer \cite{COHNKUMARMINTON}, finite dimensional vector spaces over finite fields we refer \cite{GREAVESIVERSONJASPERMIXON1, GREAVESIVERSONJASPERMIXON2} and for Banach spaces we refer \cite{MAHESHKRISHNA3} (there equiangular line problem for Banach spaces is not mentioned explicitly but Zauner conjecture for Banach spaces is formulated. One can easily formulate equiangular line problem  using that).
\begin{question}\label{EQUI}
	\textbf{(Non-Archimedean Equiangular Line Problem)	Let $\mathbb{K}$ be a non-Archimedean field. Given $a\in \mathbb{K}$, $d \in \mathbb{N}$ and $\gamma>0$, what is the maximum $n =n(\mathbb{K}, a,d, \gamma)\in \mathbb{N}$ such that there exist vectors $\tau_1, \dots, \tau_n \in \mathbb{K}^d$ satisfying the following.
\begin{enumerate}[\upshape(i)]
	\item $\langle \tau_j, \tau_j \rangle =a$ for all $1\leq j \leq n$.
	\item $|\langle \tau_j, \tau_k \rangle|^2 =\gamma$ for all $1\leq j, k \leq n, j \neq k$.
\end{enumerate}
In particular, whether there is a non-Archimedean Gerzon bound?}
\end{question}
 Question \ref{EQUI} can be easily modified to  formulate question of   non-Archimedean regular $s$-distance sets.

 \bibliographystyle{plain}
 \bibliography{reference.bib}

\end{document}